\newcommand{\modif}[1]{{#1}}
\begin{document}

\title{\textsc{Exact Dimensionality Selection for Bayesian PCA}}

 \author{\name Charles Bouveyron \email \emph{charles.bouveyron@math.cnrs.fr} \\  
     \addr 
      Laboratoire J. A. Dieudonné, UMR CNRS 7135 \\ INRIA Epione team, Sophia
Antipolis \\ Universit\'e Côte d'Azur, Nice, France \\
          \name Pierre Latouche \email \emph{pierre.latouche@math.cnrs.fr} \\
    \addr Laboratoire MAP5, UMR CNRS 8145\\
     Universit\'e Paris Descartes -- Sorbonne Paris Cit\'e, France\\
   \name Pierre-Alexandre Mattei \email \emph{pima@itu.dk} \\
           \addr 
  Department of Computer Science\\
  IT University of Copenhagen, Denmark}

\maketitle

       \medskip

\begin{abstract}
We present a Bayesian model selection approach to estimate the intrinsic dimensionality of a high-dimensional dataset. To this end, we introduce a novel formulation of the probabilisitic principal component analysis model based on a normal-gamma prior distribution. In this context, we exhibit a closed-form expression of the marginal likelihood which allows to infer an optimal number of components. We also propose a heuristic based on the expected shape of the marginal likelihood curve in order to choose the hyperparameters. In non-asymptotic frameworks, we show on simulated data that this exact dimensionality selection approach is competitive with both Bayesian and frequentist state-of-the-art methods.
\end{abstract}

\begin{keywords}
  Dimensionality reduction, Marginal likelihood, Multivariate analysis, Model selection, Principal components.
\end{keywords}

\section{Introduction}

The computer age is characterized by a surge of multivariate data, which are often difficult to explore or describe. A natural way to deal with such datasets is to reduce their dimensionality in a interpretable way, trying not to loose too much information. Accordingly, a wide range of dimension-reduction techniques have been developed over the years. Principal component analysis (PCA), perhaps the earliest of these techniques, remains today one of the most widely used \citep{jolliffe2016principal}.
Introduced by \cite{pearson1901} and rediscovered by \cite{hotelling1933} in the beginning of the twentieth century, PCA has had indeed an ubiquitous role in statistical analysis since the introduction of electronic computation in the 1950s. Recent exemples include climate research \citep{hannachi2006search}, genome-wide expression studies \citep{ringner2008}, massive text mining \citep{zhang2011}, and deep learning \citep{pcanet}. For a more exhaustive overview of past applications of PCA, we defer the reader to the monograph of \cite{jolliffe2002principal} or the recent review paper of \cite{jolliffe2016principal}.

Specifically, PCA consists in a simple procedure: the practitioner orthogonally projects his multivariate data on a space spanned by the  eigenvectors associated with the largest eigenvalues of the empirical covariance matrix. The dimension of the representation learnt in this way is simply the number of eigenvectors -- called principal components (PCs) -- kept for the projection. However, it may come as a surprise that in spite of the popularity of this method, no authoritative solution has been widely accepted for choosing how many PCs should be computed. Common practice is to choose this dimension by considering the eigenvalues scree of the sample covariance matrix. This ad-hoc technique, popularized by \cite{cattell1966scree}, has been largely modified and perfected over the last fifty years \citep{jackson1993stopping,zhu2006automatic}, and is often chosen when PCA is used as a building block within a larger algorithmic framework -- see e.g. \cite{bouveyron2007high} for an example in cluster analysis or \cite{evangelopoulos2012latent} in latent semantic analysis. However, more refined approaches have also been developed. Earlier works were based on hypothesis testing \cite[Section 6.1.4]{jolliffe2002principal}. Cross-validation, suggested by \cite{wold1978cross} and developed over the years \citep{bro2008cross}, is known to be effective in a wide variety of settings \citep{josse2012selecting}. Another fruitful line of work follows the seminal article of \cite{tipping1999}, who recast PCA as a simple inferential problem. Their model, called probabilistic PCA (PPCA), led to several model-based methods for dimensionality selection, both from frequentist \citep{ulfarsson2008dimension,bouveyron2011intrinsic,passemier} and Bayesian \citep{bishop1999bayesian,minka2000automatic,hoyle2008automatic,sobczyk2018} perspectives.

Most of the aforementioned methods are based on asymptotic considerations. However, it was recently proven that, in an asymptotic framework, hard thresholding the eigenvalues surprisingly suffices to provide an optimal dimensionality \citep{gavish2014optimal}. Thus, the path to more efficient schemes for finding the number of PCs goes through the study of non-asymptotic criteria, which have been overlooked in the past. A natural non-asymptotic answer is provided by exact Bayesian model selection, which was previously used at the price of computationally expensive Markov chain Monte Carlo (MCMC) sampling \citep{hoff2007model}. We present here a prior structure based on the PPCA model that allows us to exhibit a closed-form expression of the marginal likelihood, leading to an efficient algorithm that selects the number of PCs without any asymptotic assumption. Specifically, we rely on a normal prior distribution over the loading matrix and a gamma prior distribution over the noise variance. Imposing a simple constraint on the hyperparameters of the respective distributions, we show that this allows the data to marginally follow a generalized Laplace distribution, leading to an efficient closed-form computation of the marginal likelihood. We also propose a heuristic based on the expected shape of the marginal likelihood curve in order to choose hyperparameters. With simulated data, we demonstrate that our approach is competitive compared to state-of-the-art methods, especially in non asymptotic settings and with less observations than variables. This setting is at the core of many practical problems, such as genomics and chemometrics.

In Section 2, we briefly review PPCA and present several dimensionality selection techniques based on this model. The new normal-gamma prior is presented in Section 3 together with a derivation of the closed-form expression of the marginal likelihood. A heuristic to choose hyperparameters is also presented. Numerical experiments are provided in Section 4.

\section{Choosing the intrinsic dimension in probabilistic PCA}

Let us assume that a centered independent and identically distributed (i.i.d.) sample $\mathbf{x}_1,...,\mathbf{x}_n \in \mathbb{R}^p$ is observed that we aim at projecting onto a $d$-dimensional subspace while retaining as much variance as possible. All the observations are stored in the $n\times p$ matrix $\mathbf{X}=(\mathbf{x}_1,...,\mathbf{x}_n)^T$.

\subsection{Probabilistic PCA}

The PPCA model $\mathcal{M}_d$ assumes that, for all $i \in \{1,...,n\}$, each observation is driven by the following generative model
\begin{equation} \label{modelePPCA}
\mathbf{x}_i = \mathbf{W} \mathbf{y}_i + \boldsymbol{\varepsilon}_i,
\end{equation} where  $\mathbf{y}_i \sim \mathcal{N}(0,\mathbf{I}_d)$ is a low-dimensional Gaussian latent vector, $\mathbf{W}$ is a $p \times d$ parameter matrix called the \emph{loading matrix} and  $\boldsymbol{\varepsilon}_i \sim \mathcal{N}(0, \sigma^2 \mathbf{I}_p)$ is a Gaussian noise term.

This model is an instance of factor analysis and was first introduced by \cite{lawley1953}. \cite{tipping1999} then presented a thorough study of this model. In particular, expanding a result of \cite{theobald}, they proved that this generative model is indeed equivalent to PCA in the sense that the principal components of $\mathbf{X}$ can be retrieved using the maximum likelihood (ML) estimator $\mathbf{W}_{\textup{ML}}$ of $\mathbf{W}$. More specifically, if $\mathbf{A}$ is the $p \times d$ matrix of ordered principal eigenvectors of $\mathbf{X}^T\mathbf{X}$ and if $\boldsymbol{\Lambda}$ is the $d \times d$ diagonal matrix with corresponding eigenvalues, we have
\begin{equation} \label{ML}
\mathbf{W}_{\textup{ML}} = \mathbf{A}(\boldsymbol{\Lambda}-\sigma^2\mathbf{I}_d)^{1/2}\mathbf{R},
\end{equation}
where $\mathbf{R}$ is an arbitrary orthogonal matrix.

Under this sound probabilistic framework, dimension selection can be recast as a \emph{model selection problem}, for which standard techniques are available. We review a few important ones in the next subsection. 

\subsection{Model selection for PPCA}

The problem of finding an appropriate dimension can be seen as choosing a "best model" within a family of models $(\mathcal{M}_d)_{d \in \{1,...,p-1\}}$. A first popular approach would be to use likelihood penalization, leading to the choice
$$ d^* \in \textup{argmax}_{d \in \{1,...,p-1\}} \{ \log p(\mathbf{X}|\mathbf{W}_{\textup{ML}},\mathbf{\sigma}_{\textup{ML}},\mathcal{M}_d)-\textup{pen}(d) \},$$
where pen is a penalty which grows with $d$. These methods include the popular Akaike information criterion (AIC, \citealp{akaike1974new}), the Bayesian information criterion (BIC, \citealp{schwarz1978estimating}), or other refined approaches \citep{bai2002determining}. However, their merits are mainly asymptotic, and our main interest in this paper is to investigate non-asymptotic scenarios. While the penalty term is usually necessary to avoid selecting the largest model, under a constrained PPCA model, called isotropic PPCA, \cite{bouveyron2011intrinsic} proved that regular maximum likelihood was suprinsingly consistent. While the theoretical optimality of this method is also asymptotic, the fact that it directly maximizes a likelihood criterion which is not derived based on asymptotic considerations makes it of particular interest within the scope of this paper.

Another interesting set of techniques non-asymtotic in essence is Bayesian model selection \citep{kass1995}. \modif{Such approaches require the (approximate) computation of the marginal likelihood of Bayesian versions of the PPCA model. However, the usual Bayesian information criterion (BIC) approximation fails to approximate the marginal likelihood in the case of PPCA because of violated regularity conditions. To grasp the origin of these violations, consider the case where the true intrinsic dimensionality is one -- so that the data lives close to a  ``true line''. It is then possible to find a continuously infinite set of 2-dimensional planes that all contain this true line, leading to the non-invertibility of the Fisher information matrix of the PPCA model for $d=2$ in some parts of the parameter space, and to the failure of the Laplace approximation that underlies the BIC. More details on these problems for the very close factor analysis model can be found in \citet{drton2017}. Note that, even though the BIC provides a poor approximation of the marginal likelihood of a PPCA model, it can asymptotically lead to consistent model selection in a variety of settings \citep{bai2018}.
A more refined approach than the BIC was proposed by \cite{minka2000automatic} who derived a Laplace approximation of the marginal likelihood, that involves the eigenvalues of the covariance matrix. This technique, albeit asymptotic \modif{and subject to the same regularity violations}, has been proven empirically efficient in several small-sample scenarios.}

Another interesting framework considered in the literature is the case where both $n$ and $p$ grow to infinity. Several consistent estimators have been proposed, both from a penalization point of view \citep{bai2002determining,passemier,bai2018}, using Stein's unbiased risk estimator \citep{ulfarsson2008dimension} or in a Bayesian context \citep{hoyle2008automatic,sobczyk2018}. While these high-dimensional scenarios are of growing importance, they fall beyond the scope of this paper, which is focused on the non-asymptotic setting (with potentially fewer observations than variables), for which very few automatic dimension selection methods are available.

\section{Exact dimensionality selection for PPCA under a normal-gamma prior}

In this section, we present a prior structure that leads to a closed-form expression for the marginal likelihood of PPCA.

\subsection{The model}

We consider the regular PPCA model already defined in \eqref{modelePPCA},
\begin{equation*} 
\forall i \in \{1,...,n\}, \; \mathbf{x}_i = \mathbf{W} \mathbf{y}_i + \boldsymbol{\varepsilon}_i,
\end{equation*} 
where  $\mathbf{y}_i \sim \mathcal{N}(0,\mathbf{I}_d)$, $\mathbf{W}$ is a $p \times d$ parameter matrix, and $\boldsymbol{\varepsilon}_i \sim \mathcal{N}(0, \sigma^2 \mathbf{I}_p)$. We rely on a Gaussian prior distribution over the loading matrix $\mathbf{W}$ and a gamma prior distribution over the noise variance $\sigma^2$. Specifically, we use a gamma prior $\sigma^2 \sim \textup{Gamma}(a,b)$ with hyperparameters $a>0$ and $b>0$ together with i.i.d. Gaussian priors \modif{for the entries of the loading matrix} $w_{jk} \sim \mathcal{N}(0,\phi^{-1})$ for $j \in \{1,...,p\}$ and $k \in \{1,...,d\}$ with some precision hyperparameter $\phi >0$. 

Within the framework of Bayesian model uncertainty \citep{kass1995}, the posterior probabilities of models can be written as, for all $d \in \{1,...,p\}$,
\begin{equation} p(\mathcal{M}_d|\mathbf{X},a,b,\phi) \propto p(\mathbf{X}|a,b,\phi,\mathcal{M}_d) p(\mathcal{M}_d), \end{equation}
where $$p(\mathbf{X}|a,b,\phi,\mathcal{M}_d) =\prod_{i=1}^n \int_{\mathbb{R}^{d \times p} \times \mathbb{R}^+ } p(\mathbf{x}_i|\mathbf{W},\sigma,\mathcal{M}_d) p(\mathbf{W}|\phi) p (\sigma | a,b) d\mathbf{W}d\sigma,$$ is the \emph{marginal likelihood} of the data under conditional independence \citep{kass1989}. Note that this expression also involves model prior probabilities -- in this paper, we will simply consider a uniform prior $$\forall d \in \{1,...,p\},  \; p(\mathcal{M}_d) \propto 1.$$

Computing the high-dimensional integral of the marginal likelihood usually comes at the price of various approximations \citep{bishop1999bayesian,minka2000automatic,hoyle2008automatic} or expensive sampling \citep{hoff2007model}.
However, with our specific choice of priors, and imposing a constraint on their respective hyperparameters, we  obtain a closed-form expression for the marginal likelihood. 

\begin{theorem}
Let $d \in \{1,...,p\}$. Under the normal-gamma prior with $b=\phi/2$, the log-marginal likelihood of model $\mathcal{M}_d$ is given by 
\begin{equation} \label{maintheo}
\begin{aligned}
\log  p(\mathbf{X}|a,\phi,\mathcal{M}_d)&= \sum_{i=1}^n\log p(\mathbf{x}_i|a,\phi,\mathcal{M}_d) \\
&= - n \log 2 -\frac{np}{2}\log(2\pi)-\frac{np}{2}\log(2\phi^{-1})-n \log \Gamma(a+d/2) \\
&\quad + (a+\frac{d-p}{2}) \sum_{i=1}^n\log(\frac{\sqrt{\phi} ||\mathbf{x}_i||_2}{2} )
+ \sum_{i=1}^{n}\log K_{a+(d-p)/2}(\sqrt{\phi}||\mathbf{x}_i||_2),
\end{aligned}
\end{equation}
where $K_{\nu}$ is the modified Bessel function of the second kind of order $\nu\in \mathbb{R}$.
\end{theorem}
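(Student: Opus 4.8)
The plan is to reduce the whole computation to a single one-dimensional integral by marginalizing the loading matrix and the latent variables in a favorable order, and then to recognize the resulting integral as an integral representation of the modified Bessel function $K_\nu$. I would work at the level of a single observation $\mathbf{x}_i$, since the sample is i.i.d. and the stated formula is a sum of per-observation contributions; the full log-marginal likelihood is recovered at the end by summing over $i$.

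First I would integrate out $\mathbf{W}$ conditionally on $\mathbf{y}_i$ and $\sigma^2$, which is more convenient than marginalizing $\mathbf{y}_i$ first (the latter produces the anisotropic covariance $\mathbf{W}\mathbf{W}^T+\sigma^2\mathbf{I}_p$). Because $\mathbf{x}_i=\mathbf{W}\mathbf{y}_i+\boldsymbol{\varepsilon}_i$ is a linear function of the Gaussian entries $w_{jk}\sim\mathcal{N}(0,\phi^{-1})$, the vector $\mathbf{W}\mathbf{y}_i$ is centered Gaussian, and a direct covariance computation gives $\textup{Cov}(\mathbf{W}\mathbf{y}_i)=\phi^{-1}\|\mathbf{y}_i\|_2^2\,\mathbf{I}_p$; the key point is that this covariance is isotropic, which holds precisely because the rows of $\mathbf{W}$ are independent and identically distributed. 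Adding the independent noise $\boldsymbol{\varepsilon}_i$ yields
\begin{equation*}
\mathbf{x}_i \mid \mathbf{y}_i,\sigma^2 \sim \mathcal{N}\!\left(0,\,(\phi^{-1}\|\mathbf{y}_i\|_2^2 + \sigma^2)\,\mathbf{I}_p\right).
\end{equation*}
Thus, conditionally on the scalar $\tau:=\phi^{-1}\|\mathbf{y}_i\|_2^2+\sigma^2$, the observation $\mathbf{x}_i$ is isotropic Gaussian, so $p(\mathbf{x}_i)$ is a Gaussian scale mixture whose only remaining ingredient is the marginal law of $\tau$.

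The crucial step, and where the hypothesis $b=\phi/2$ enters, is to identify that law. Since $\|\mathbf{y}_i\|_2^2\sim\chi^2_d=\textup{Gamma}(d/2,1/2)$, rescaling gives $\phi^{-1}\|\mathbf{y}_i\|_2^2\sim\textup{Gamma}(d/2,\phi/2)$ in the shape/rate parametrization. The prior $\sigma^2\sim\textup{Gamma}(a,b)$ has rate $b$, so imposing $b=\phi/2$ makes the two independent Gamma contributions share the same rate; their sum is therefore again Gamma, namely $\tau\sim\textup{Gamma}(a+d/2,\phi/2)$. This is exactly the constraint that collapses the otherwise intractable mixture into a single tractable Gamma mixing density.

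It then remains to evaluate
\begin{equation*}
p(\mathbf{x}_i) = (2\pi)^{-p/2}\frac{(\phi/2)^{a+d/2}}{\Gamma(a+d/2)}\int_0^\infty \tau^{a+(d-p)/2-1}\exp\!\left(-\frac{\|\mathbf{x}_i\|_2^2}{2\tau}-\frac{\phi}{2}\tau\right)d\tau,
\end{equation*}
which I would recognize as the standard integral $\int_0^\infty \tau^{\nu-1}e^{-\alpha/\tau-\beta\tau}\,d\tau = 2(\alpha/\beta)^{\nu/2}K_\nu(2\sqrt{\alpha\beta})$ with $\nu=a+(d-p)/2$, $\alpha=\|\mathbf{x}_i\|_2^2/2$ and $\beta=\phi/2$; note that $2\sqrt{\alpha\beta}=\sqrt{\phi}\,\|\mathbf{x}_i\|_2$ reproduces the argument of the Bessel function in the statement. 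Taking logarithms, collecting the constants, and summing over $i$ yields the claimed expression. The one genuinely delicate part is this final bookkeeping of multiplicative constants — the powers of $\phi/2$ coming from the Gamma normalizer and from the $(\alpha/\beta)^{\nu/2}$ factor must be combined carefully against the $-\tfrac{np}{2}\log(2\phi^{-1})$ term — whereas the conceptual content lies entirely in the isotropy of $\textup{Cov}(\mathbf{W}\mathbf{y}_i)$ and in the Gamma-sum identity enabled by $b=\phi/2$.
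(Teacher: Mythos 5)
Your proof is correct, and it reaches the stated formula by a genuinely different route than the paper. The paper's proof works at the level of distributional identities for random vectors, using the machinery of the generalized asymmetric Laplace (GAL) family: Lemma 1 shows that $\mathbf{W}\mathbf{y}_i$ is $\textup{GAL}_p(2\phi^{-1}\mathbf{I}_p,0,d/2)$ by applying the Gauss--Laplace transmutation column by column (with a sign-symmetry trick to justify $y_k\mathbf{w}_k \buildrel d \over = \sqrt{y_k^2}\,\mathbf{w}_k$) and then invoking the GAL convolution property; Lemma 2 shows $\boldsymbol{\varepsilon}_i \sim \textup{GAL}_p(b^{-1}\mathbf{I}_p,0,a)$ by the same transmutation; the constraint $b=\phi/2$ is then exactly what makes the two GAL scale matrices coincide, so one more application of the convolution property gives $\mathbf{x}_i\sim\textup{GAL}_p(2\phi^{-1}\mathbf{I}_p,0,a+d/2)$, and the Bessel function enters through the known GAL density (Proposition 1). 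You bypass the GAL formalism entirely: you condition on $(\mathbf{y}_i,\sigma^2)$, observe that marginalizing $\mathbf{W}$ yields an isotropic Gaussian with random variance $\tau=\phi^{-1}\|\mathbf{y}_i\|_2^2+\sigma^2$, note that $b=\phi/2$ is precisely what makes $\tau$ a sum of two independent gammas with common rate $\phi/2$ (hence gamma), and finish with the standard integral representation of $K_\nu$. The two arguments are the same computation in different clothes --- your rate-matching is the paper's matching of scale matrices, and your final integral amounts to re-deriving the GAL density in the isotropic symmetric case --- but yours is more elementary and self-contained (no per-column decomposition, no appeal to the Kotz et al.\ density formula), while the paper's buys the conceptual by-product, used elsewhere in the text, that the data marginally follow a generalized Laplace distribution.

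Two minor remarks. First, if you carry out the bookkeeping you describe, you will find an extra additive constant $n\log 2$ (coming from the factor $2$ in the Bessel integral identity) that is absent from the theorem as stated; the paper's own Proposition 1 carries the same factor $2$, so this is a typo in the statement, harmless for model selection since it does not depend on $d$. Second, your justification for working per observation ("the sample is i.i.d.") is looser than it looks: after marginalizing the \emph{shared} parameters $\mathbf{W}$ and $\sigma^2$, the $\mathbf{x}_i$ are exchangeable but not independent, so the factorization $p(\mathbf{X})=\prod_i p(\mathbf{x}_i)$ is itself a nontrivial part of the claim. The paper's proof makes exactly the same silent reduction to per-observation marginals, so you are faithful to the argument being reproduced, but neither proof actually establishes that first equality in \eqref{maintheo}.
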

A detailed proof of this theorem is given in the next subsection. \modif{Note that the modified Bessel function $K_{\nu}$  can be evaluated using most statistical computing software. For instance, we used in our experiments the R package \texttt{Bessel} \citep{maechler}.}

To the best of our knowledge, this result is the first computation of the marginal likelihood of a PPCA model. It is worth mentioning that, in a slightly different context, \cite{ando2009} also derived the marginal likelihood of a factor analysis model, with Student factors. Similarly, \cite{bouveyron2016} derived the exact marginal likelihood of the noiseless PPCA model, in order to obtain sparse PCs. 

 It is worth noticing that the use of a gamma prior for a variance parameter is rather peculiar. Indeed, most Bayesian hierarchical models choose \emph{inverse-gamma} priors for variances. This choice is often motivated by its conjugacy properties (see e.g. \citealp{george1993}, for a linear regression example or \citealp{murphy}, in a wider setting). The derivation provided in the next subsection notably explains why this gamma prior over $\sigma^2$ actually arises naturally.

\modif{Regarding the loading matrix, Gaussian priors have been extensively used in the past \citep{bishop1999bayesian,archambeau2009,nakajima15,bouveyron2016}. They also correspond to the usual prior choice for probabilistic matrix factorization \citep{mnih2008}. Since the number of ``free parameters'' of the loading matrix is $dp-d(d-1)/2$ (see e.g. \citealp[Appendix B]{sobczyk2018}), such Gaussian priors that sample independently $dp$ values can be seen as too overparametrised. Consequently, several methods build priors on parameter spaces of smaller dimensions. Indeed, both \citet{minka2000automatic} and \cite{hoff2007model} place priors on the singular values decomposition of $\mathbf{W}$ rather than directly on $\mathbf{W}$. These approaches allow to maintain the identifiability of the PPCA model. On the other hand, overparametrised priors, like the one considered in this paper, lead to more complex and less interpretable posteriors, but have particularly interesting dimensionality selection properties, both in practice \citep{bishop1999bayesian,archambeau2009,knowles2011} and in theory \citep{nakajima15}. Note also that \citet{ando2009} as well as \citet{bouveyron2016}, who derived closed-forms of marginal likelihoods of related models, relied on such overparametrised priors.}

When it comes to hyperparameters, while \cite{hoff2007model} uses empirical Bayes heuristics, \citet{minka2000automatic} avoids hyperparameter specification by relying on the asymptotics of the Laplace approximation. Similarly, the BIC-based approach of \citet{sobczyk2018} is prior-independent. Being prior-dependent our approach would allow conversely to tune hyperparameters using prior knowledge (for example, prior knowledge on $\sigma$ may be known by assessing PCA reconstruction quality). We also propose an automatic empirical Bayes way of choosing these hyperparameters in Section \ref{ss:hyp}.

\subsection{Derivation of the marginal likelihood}

We begin by shortly reviewing the generalized Laplace distribution, which will prove to be key within the PPCA framework. This distribution was introduced by \citet[p. 257]{kotz2001}. For a more detailed overview, see \citet{kozubowski2013}.

\begin{definition} A random variable $\mathbf{z} \in \mathbb{R}^p$ is said to have a \textbf{multivariate generalized asymmetric Laplace distribution} with parameters $s>0, \boldsymbol{\mu} \in \mathbb{R}^p$ and $\mathbf{\Sigma} \in \mathcal{S}_p^+$ if its characteristic function is $$\forall \mathbf{u} \in \mathbb{R}^p, \; \phi_{\textup{GAL}_p(\mathbf{\Sigma}, \boldsymbol{\mu},s)}(\mathbf{u})=\left(\frac{1}{1+\frac{1}{2} \mathbf{u}^T\mathbf{\Sigma}\mathbf{u} - i \boldsymbol{\mu}^T\mathbf{u}}\right)^s.$$\end{definition}

When $\boldsymbol{\mu}=0$, the generalized Laplace distribution is elliptically contoured and is referred to as the \emph{symmetric} generalized Laplace distribution. The elementary properties of the generalized Laplace distribution are discussed by \citet{kozubowski2013}. We list the ones that we consider in the proof of Theorem 1.

\begin{proposition} If $\mathbf{z} \sim \textup{GAL}_p(\mathbf{\Sigma}, \boldsymbol{\mu},s)$, we have $\mathbb{E}(\mathbf{z})=s\boldsymbol{\mu}$ and $\textup{Cov}(\mathbf{z})=s(\mathbf{\Sigma} +\boldsymbol{\mu}\boldsymbol{\mu}^T )$. Moreover, if  $\mathbf{\Sigma}$ is positive definite, the density of $\mathbf{z}$ is given by \begin{equation}
\forall \mathbf{x} \in \mathbb{R}^p, \;  f_\mathbf{z}(\mathbf{x}) = \frac{2 e^{\boldsymbol{\mu}^T\mathbf{\Sigma}^{-1}\mathbf{x}}}{(2\pi)^{p/2}\Gamma(s)\sqrt{\det{\boldsymbol{\Sigma}}}} \left( \frac{Q_{\mathbf{\Sigma}}(\mathbf{x})}{C(\boldsymbol{\Sigma},\boldsymbol{\mu})}\right)^{s-p/2} K_{s-p/2} \left(Q_{\mathbf{\Sigma}}(\mathbf{x}) C(\boldsymbol{\Sigma},\boldsymbol{\mu})\right), \label{density} \end{equation} where $ Q_{\mathbf{\Sigma}}(\mathbf{x})= \sqrt{\mathbf{x}^T\mathbf{\Sigma}^{-1}\mathbf{x}}$ and $C(\boldsymbol{\Sigma},\boldsymbol{\mu})=\sqrt{2 + \boldsymbol{\mu}^T\mathbf{\Sigma}^{-1}\boldsymbol{\mu}}$.
\end{proposition}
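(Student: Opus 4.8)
The plan is to derive all three assertions from a single structural fact: that $\mathrm{GAL}_p(\mathbf{\Sigma},\boldsymbol{\mu},s)$ admits a representation as a normal mean--variance mixture whose mixing variable is gamma-distributed. Concretely, I would introduce an auxiliary scalar $G\sim\mathrm{Gamma}(s,1)$ and set $\mathbf{z}\mid G\sim\mathcal{N}(G\boldsymbol{\mu},G\mathbf{\Sigma})$. To justify that this $\mathbf{z}$ is exactly $\mathrm{GAL}_p(\mathbf{\Sigma},\boldsymbol{\mu},s)$, I would compute its characteristic function by conditioning: the inner Gaussian expectation yields $\exp(-G(\tfrac12\mathbf{u}^T\mathbf{\Sigma}\mathbf{u}-i\boldsymbol{\mu}^T\mathbf{u}))$, and averaging this over $G$ amounts to evaluating the Laplace transform of a $\mathrm{Gamma}(s,1)$ law, which returns precisely $(1+\tfrac12\mathbf{u}^T\mathbf{\Sigma}\mathbf{u}-i\boldsymbol{\mu}^T\mathbf{u})^{-s}$. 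Since the characteristic function determines the distribution, this identifies the mixture with the law of the definition.

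With the mixture in hand, the moment formulas are immediate. By the tower property, $\mathbb{E}(\mathbf{z})=\mathbb{E}(\mathbb{E}(\mathbf{z}\mid G))=\mathbb{E}(G)\boldsymbol{\mu}=s\boldsymbol{\mu}$, since $\mathrm{Gamma}(s,1)$ has mean $s$. For the covariance I would invoke the law of total covariance, $\mathrm{Cov}(\mathbf{z})=\mathbb{E}(\mathrm{Cov}(\mathbf{z}\mid G))+\mathrm{Cov}(\mathbb{E}(\mathbf{z}\mid G))=\mathbb{E}(G)\mathbf{\Sigma}+\mathrm{Var}(G)\,\boldsymbol{\mu}\boldsymbol{\mu}^T$, and conclude with $\mathbb{E}(G)=\mathrm{Var}(G)=s$, giving $s(\mathbf{\Sigma}+\boldsymbol{\mu}\boldsymbol{\mu}^T)$. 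Alternatively, both moments can be read off by differentiating $\log\phi_{\mathrm{GAL}}$ at the origin, but the mixture route is shorter and reuses the very object needed for the density.

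For the density, in the positive-definite case (so that $\mathbf{\Sigma}^{-1}$ and $\det\mathbf{\Sigma}$ are available), I would marginalize the mixture: $f_\mathbf{z}(\mathbf{x})=\int_0^\infty \mathcal{N}(\mathbf{x};G\boldsymbol{\mu},G\mathbf{\Sigma})\,g_{s,1}(G)\,dG$, where $g_{s,1}$ is the $\mathrm{Gamma}(s,1)$ density. Expanding the Gaussian quadratic as $(\mathbf{x}-G\boldsymbol{\mu})^T\mathbf{\Sigma}^{-1}(\mathbf{x}-G\boldsymbol{\mu})=\mathbf{x}^T\mathbf{\Sigma}^{-1}\mathbf{x}-2G\boldsymbol{\mu}^T\mathbf{\Sigma}^{-1}\mathbf{x}+G^2\boldsymbol{\mu}^T\mathbf{\Sigma}^{-1}\boldsymbol{\mu}$ and dividing by $2G$, the cross term produces a $G$-independent factor $e^{\boldsymbol{\mu}^T\mathbf{\Sigma}^{-1}\mathbf{x}}$ that pulls outside the integral and accounts for the exponential prefactor in \eqref{density}. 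Collecting powers of $G$, the remaining integral takes the canonical form $\int_0^\infty G^{\nu-1}\exp(-A/G-BG)\,dG$ with $\nu=s-p/2$, $A=\tfrac12 Q_{\mathbf{\Sigma}}(\mathbf{x})^2$ and $B=\tfrac12 C(\mathbf{\Sigma},\boldsymbol{\mu})^2$.

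The crux --- and the step I expect to be the main obstacle --- is recognizing this last integral as the standard integral representation of the modified Bessel function, $\int_0^\infty G^{\nu-1}e^{-A/G-BG}\,dG=2(A/B)^{\nu/2}K_\nu(2\sqrt{AB})$. Substituting $A$ and $B$ gives the argument $2\sqrt{AB}=Q_{\mathbf{\Sigma}}(\mathbf{x})\,C(\mathbf{\Sigma},\boldsymbol{\mu})$ and the ratio $(A/B)^{\nu/2}=(Q_{\mathbf{\Sigma}}(\mathbf{x})/C(\mathbf{\Sigma},\boldsymbol{\mu}))^{s-p/2}$, which combine with the Gaussian and gamma normalizing constants to reproduce exactly \eqref{density}. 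The bookkeeping of constants ($(2\pi)^{-p/2}$, $\sqrt{\det\mathbf{\Sigma}}$, $\Gamma(s)$, and the factor $2$) is routine once the Bessel identity is applied; the only genuine care needed is checking convergence of the $G$-integral at both endpoints, which holds because $B>0$ and $A\ge0$, and invoking uniqueness of the characteristic function to license the mixture representation in the first place.
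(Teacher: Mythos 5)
Your proof is correct, but it is worth noting that the paper itself offers no proof of this proposition at all: it is stated as a known fact and deferred to the references (Kozubowski et al.\ 2013; Kotz et al.\ 2001). What you have written is essentially the standard derivation found in those references, carried out in full. Your key structural device --- the normal mean--variance mixture $\mathbf{z}\mid G\sim\mathcal{N}(G\boldsymbol{\mu},G\mathbf{\Sigma})$ with $G\sim\textup{Gamma}(s,1)$, identified with $\textup{GAL}_p(\mathbf{\Sigma},\boldsymbol{\mu},s)$ via the characteristic function --- is sound: the conditional Gaussian characteristic function combined with the gamma Laplace transform $\mathbb{E}(e^{-tG})=(1+t)^{-s}$ (valid for complex $t$ with nonnegative real part by analytic continuation) recovers the defining characteristic function exactly; the tower property and the law of total covariance give the moments since $\mathbb{E}(G)=\textup{Var}(G)=s$; and the marginalization with the Bessel identity $\int_0^\infty G^{\nu-1}e^{-A/G-BG}\,dG=2(A/B)^{\nu/2}K_\nu(2\sqrt{AB})$ produces the density with all constants matching. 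A pleasant by-product of your route is that it simultaneously proves the paper's Proposition 3 (the Gauss--Laplace transmutation, also cited without proof there): setting $\boldsymbol{\mu}=0$ in your mixture representation is precisely that statement, so your single argument underwrites both of the paper's imported facts. One small caveat: your convergence remark ($A\ge 0$, $B>0$) is not quite enough at $\mathbf{x}=0$, where $A=0$ and the integral diverges whenever $s\le p/2$; the density formula should be read as holding for $\mathbf{x}\neq 0$ (the statement in the paper glosses over this as well), which is harmless since it is a null set.
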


\begin{proposition} \label{prop:summation}
Let $s_1,s_2>0, \boldsymbol{\mu} \in \mathbb{R}^p$ and $\mathbf{\Sigma} \in \mathcal{S}_p^+$. If $\mathbf{z}_1\sim \textup{GAL}_p(\mathbf{\Sigma}, \boldsymbol{\mu},s_1)$ and $\mathbf{z}_2\sim \textup{GAL}_p(\mathbf{\Sigma}, \boldsymbol{\mu},s_2)$ are independant random variables, then \begin{equation} \mathbf{z}_1+\mathbf{z}_2\sim \textup{GAL}_p(\mathbf{\Sigma}, \boldsymbol{\mu},s_1+s_2).
\end{equation}
\end{proposition}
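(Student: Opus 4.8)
The plan is to work entirely at the level of characteristic functions, which is the most economical route given that the generalized Laplace law is defined through its characteristic function in the first place. The backbone of the argument is the classical fact that for independent random vectors the characteristic function of a sum factorizes as the product of the individual characteristic functions, combined with the uniqueness theorem for characteristic functions.

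First I would write, for every $\mathbf{u} \in \mathbb{R}^p$,
$$\phi_{\mathbf{z}_1 + \mathbf{z}_2}(\mathbf{u}) = \phi_{\mathbf{z}_1}(\mathbf{u})\, \phi_{\mathbf{z}_2}(\mathbf{u}),$$
using the independence of $\mathbf{z}_1$ and $\mathbf{z}_2$. Substituting the defining expression of the GAL characteristic function into each factor, the two characteristic functions share the common base $g(\mathbf{u}) = 1 + \frac{1}{2}\mathbf{u}^T\mathbf{\Sigma}\mathbf{u} - i \boldsymbol{\mu}^T\mathbf{u}$, raised to the powers $-s_1$ and $-s_2$ respectively. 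It then remains to combine the exponents, yielding $g(\mathbf{u})^{-(s_1+s_2)}$, which is precisely the characteristic function of $\textup{GAL}_p(\mathbf{\Sigma}, \boldsymbol{\mu}, s_1+s_2)$. Since a probability distribution on $\mathbb{R}^p$ is uniquely determined by its characteristic function, this identifies the law of $\mathbf{z}_1+\mathbf{z}_2$ and concludes the proof.

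The one point that deserves care --- and which I regard as the only genuine subtlety --- is the combination of the complex powers, i.e.\ the identity $g(\mathbf{u})^{-s_1} g(\mathbf{u})^{-s_2} = g(\mathbf{u})^{-(s_1+s_2)}$ for non-integer exponents $s_1, s_2$. For an arbitrary complex base this can fail because of the branch structure of the map $z \mapsto z^{s}$. Here, however, the positive-definiteness of $\mathbf{\Sigma}$ guarantees that $\frac{1}{2}\mathbf{u}^T\mathbf{\Sigma}\mathbf{u} \geq 0$, so the real part of $g(\mathbf{u})$ is at least $1$ and $g(\mathbf{u})$ lies in the open right half-plane for every $\mathbf{u} \in \mathbb{R}^p$. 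In particular $g(\mathbf{u})$ never vanishes, and on that half-plane the principal branch of the power function is analytic and single-valued, so the additive law of exponents holds without ambiguity. This makes the factorization licit and leaves no gap in the argument.
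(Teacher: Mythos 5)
Your proof is correct and follows the same route as the paper, which simply notes that the result is a direct consequence of the form of the GAL characteristic function: independence gives the product of the two characteristic functions, and the exponents $-s_1$ and $-s_2$ combine into $-(s_1+s_2)$. Your additional remark that $g(\mathbf{u})$ lies in the open right half-plane (so the principal-branch power law $g^{-s_1}g^{-s_2}=g^{-(s_1+s_2)}$ is unambiguous) is a careful justification of a point the paper leaves implicit.
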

This proposition is a direct consequence of the expression of the characteristic function of the generalized Laplace distribution.

 Another appealing property of  the multivariate generalized Laplace distribution is that it can be interpreted as an infinite scale mixture of Gaussians with gamma mixing distribution (a property called \emph{Gauss-Laplace representation} by \citealp{ding2017}).

\begin{proposition}[Generalized Gauss-Laplace representation] \label{prop:transmut}
Let $s>0$ and $\mathbf{\Sigma} \in \mathcal{S}_p^+$. If $u \sim \textup{Gamma}(s,1)$ and $\mathbf{x} \sim \mathcal{N}(0,\mathbf{\Sigma})$ is independent of $u$, we have\begin{equation}  \sqrt{u}\mathbf{x}\sim \textup{GAL}_p(\mathbf{\Sigma}, 0,s). \label{normalmean}
\end{equation}
\end{proposition}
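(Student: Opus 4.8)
The plan is to work entirely at the level of characteristic functions. Since the generalized Laplace law of Definition~1 is specified through its characteristic function, and since the scale mixture $\sqrt{u}\mathbf{x}$ is most naturally handled by conditioning on the mixing variable $u$, the cleanest route is to compute the characteristic function of $\mathbf{z}=\sqrt{u}\mathbf{x}$ directly and match it against the stated form with $\boldsymbol{\mu}=0$.

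First I would apply the tower property, conditioning on $u$. Given $u$, the vector $\sqrt{u}\mathbf{x}$ is Gaussian with mean $0$ and covariance $u\mathbf{\Sigma}$, so its conditional characteristic function at any $\mathbf{t}\in\mathbb{R}^p$ equals $\exp(-\tfrac{1}{2}u\,\mathbf{t}^T\mathbf{\Sigma}\mathbf{t})$. Taking the expectation over $u$ then gives $\phi_{\mathbf{z}}(\mathbf{t})=\mathbb{E}_u[\exp(-\tfrac{1}{2}u\,\mathbf{t}^T\mathbf{\Sigma}\mathbf{t})]$, which reduces the problem to evaluating the Laplace transform of a $\textup{Gamma}(s,1)$ variable at the point $\lambda=\tfrac{1}{2}\mathbf{t}^T\mathbf{\Sigma}\mathbf{t}$.

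Second, using the density $\frac{1}{\Gamma(s)}u^{s-1}e^{-u}$ of $\textup{Gamma}(s,1)$, a standard integral computation yields $\mathbb{E}[e^{-\lambda u}]=(1+\lambda)^{-s}$ for every $\lambda\geq 0$. The positive-definiteness of $\mathbf{\Sigma}$ guarantees $\lambda=\tfrac{1}{2}\mathbf{t}^T\mathbf{\Sigma}\mathbf{t}\geq 0$, so this formula applies for all $\mathbf{t}$. Substituting back gives $\phi_{\mathbf{z}}(\mathbf{t})=(1+\tfrac{1}{2}\mathbf{t}^T\mathbf{\Sigma}\mathbf{t})^{-s}$, which is exactly the characteristic function of $\textup{GAL}_p(\mathbf{\Sigma},0,s)$ obtained from Definition~1 by setting $\boldsymbol{\mu}=0$. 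By the uniqueness theorem for characteristic functions, this identifies the law of $\sqrt{u}\mathbf{x}$ and completes the proof.

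I do not expect any genuine obstacle here: the argument is a short two-line conditioning computation followed by recognizing the Gamma Laplace transform. The only point requiring a small amount of care is checking that the evaluation point $\lambda$ lies in the region of convergence of that transform, which is precisely where the positive-definiteness of $\mathbf{\Sigma}$ enters.
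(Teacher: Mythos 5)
Your proof is correct. Note that the paper itself does not prove this proposition at all---it defers to \citet[chap.~6]{kotz2001}---so your characteristic-function argument (condition on $u$, recognize the Gaussian conditional characteristic function $\exp(-\tfrac{1}{2}u\,\mathbf{t}^T\mathbf{\Sigma}\mathbf{t})$, and evaluate the Gamma Laplace transform $\mathbb{E}[e^{-\lambda u}]=(1+\lambda)^{-s}$) supplies exactly the missing self-contained verification, and it is the standard route given that the $\textup{GAL}$ law is defined through its characteristic function. The only cosmetic remark is that positive semidefiniteness of $\mathbf{\Sigma}$ already gives $\lambda\geq 0$, and the Gamma Laplace transform in fact converges for all $\lambda>-1$, so the convergence check you flag is even less delicate than you suggest.
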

For a proof of this result, see \citet[Chapter 6]{kotz2001}. 

To prove Theorem 1, we first study the marginal distribution of the signal term. Following \cite{mattei2017multiplying}, we can state the following lemma.

\begin{theorem} 
	Let $\mathbf{W}$ be a $p \times d$ random matrix with i.i.d. columns following a $\mathcal{N}(0,\phi^{-1}\mathbf{I}_p)$ distribution, $\mathbf{y}\sim \mathcal{N}(0,\mathbf{I}_d)$ be a Gaussian vector independent from $\mathbf{W}$. We obtain \begin{equation}
	\mathbf{Wy} \sim  \textup{GAL}_p(2 \phi^{-1}\mathbf{I}_p, 0,d/2).
	\end{equation}
	
\end{theorem}

\begin{lemma} 
Let $\mathbf{W}$ be a $p \times d$ random matrix with i.i.d. columns following a $\mathcal{N}(0,\phi^{-1}\mathbf{I}_p)$ distribution, $\mathbf{y}\sim \mathcal{N}(0,\mathbf{I}_d)$ be a Gaussian vector independent from $\mathbf{W}$. We obtain \begin{equation}
\mathbf{Wy} \sim  \textup{GAL}_p(2 \phi^{-1}\mathbf{I}_p, 0,d/2).
\end{equation}

\end{lemma}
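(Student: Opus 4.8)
The plan is to decompose the matrix--vector product $\mathbf{Wy}$ into a sum of independent contributions, one per column of $\mathbf{W}$, identify each contribution as a symmetric generalized Laplace vector of shape parameter $1/2$ via the Gauss--Laplace transmutation of Proposition~\ref{prop:transmut}, and then assemble them using the stability-under-summation property of Proposition~\ref{prop:summation}. Writing $\mathbf{W}=(\mathbf{w}_1,\dots,\mathbf{w}_d)$ with $\mathbf{w}_k\sim\mathcal{N}(0,\phi^{-1}\mathbf{I}_p)$ i.i.d.\ and $\mathbf{y}=(y_1,\dots,y_d)^T$ with $y_k\sim\mathcal{N}(0,1)$ i.i.d., I would first record that
\[
\mathbf{Wy}=\sum_{k=1}^d y_k\mathbf{w}_k,
\]
where the $d$ summands are mutually independent since the pairs $(y_k,\mathbf{w}_k)$ are independent across $k$.

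The key single-term claim is that $y_k\mathbf{w}_k\sim\textup{GAL}_p(2\phi^{-1}\mathbf{I}_p,0,1/2)$. Because $\mathbf{w}_k$ is spherically symmetric, the conditional law of $y_k\mathbf{w}_k$ given $y_k$ depends only on $y_k^2$, so that $y_k\mathbf{w}_k$ has the same distribution as $\sqrt{y_k^2}\,\mathbf{w}_k$. Since $y_k^2\sim\chi_1^2$, I then want to cast this in the precise form demanded by Proposition~\ref{prop:transmut}, which expects a $\textup{Gamma}(s,1)$ multiplier with unit rate. The clean device is to set $u=y_k^2/2\sim\textup{Gamma}(1/2,1)$ and write $\sqrt{y_k^2}\,\mathbf{w}_k=\sqrt{u}\,(\sqrt{2}\,\mathbf{w}_k)$, where $\sqrt{2}\,\mathbf{w}_k\sim\mathcal{N}(0,2\phi^{-1}\mathbf{I}_p)$ is independent of $u$. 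Proposition~\ref{prop:transmut} applied with $s=1/2$ and $\mathbf{\Sigma}=2\phi^{-1}\mathbf{I}_p$ then delivers exactly $y_k\mathbf{w}_k\sim\textup{GAL}_p(2\phi^{-1}\mathbf{I}_p,0,1/2)$.

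Finally I would invoke Proposition~\ref{prop:summation}, by an immediate induction on $d$: summing $d$ independent $\textup{GAL}_p(2\phi^{-1}\mathbf{I}_p,0,1/2)$ vectors, which all share the same $\mathbf{\Sigma}$ and $\boldsymbol{\mu}=0$, yields a $\textup{GAL}_p(2\phi^{-1}\mathbf{I}_p,0,d/2)$ vector, which is the announced law of $\mathbf{Wy}$.

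The main obstacle here is bookkeeping rather than conceptual: matching the $\chi_1^2$ multiplier to the specific $\textup{Gamma}(s,1)$ normalisation of Proposition~\ref{prop:transmut}. The rate-one convention forces the rescaling $u=y_k^2/2$, and the attendant factor $\sqrt{2}$ must be absorbed into the Gaussian covariance, which is precisely what promotes $\phi^{-1}\mathbf{I}_p$ to the $2\phi^{-1}\mathbf{I}_p$ appearing in the statement. As an independent cross-check I would verify the characteristic function directly: conditioning on $\mathbf{y}$ gives $\mathbf{Wy}\mid\mathbf{y}\sim\mathcal{N}(0,\|\mathbf{y}\|_2^2\,\phi^{-1}\mathbf{I}_p)$, and integrating the conditional characteristic function against the $\chi_d^2$ law of $\|\mathbf{y}\|_2^2$ returns $(1+\phi^{-1}\mathbf{u}^T\mathbf{u})^{-d/2}$, which is exactly the characteristic function of $\textup{GAL}_p(2\phi^{-1}\mathbf{I}_p,0,d/2)$ read off from the definition.
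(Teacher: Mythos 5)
Your proof is correct and follows essentially the same route as the paper: the column decomposition $\mathbf{Wy}=\sum_k y_k\mathbf{w}_k$, the identification $y_k\mathbf{w}_k \sim \textup{GAL}_p(2\phi^{-1}\mathbf{I}_p,0,1/2)$ via the rescaling $u=y_k^2/2\sim\textup{Gamma}(1/2,1)$ with the $\sqrt{2}$ absorbed into the Gaussian covariance (the paper justifies $y_k\mathbf{w}_k \buildrel d \over = |y_k|\mathbf{w}_k$ by sign-flip invariance where you invoke spherical symmetry, an equivalent argument), and finally the summation property of Proposition~\ref{prop:summation}. Your closing characteristic-function computation is a nice independent check that the paper does not include, and it would in fact serve as a complete standalone proof by itself.
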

\begin{proof}
For each $k \in \{1,...,d\}$ let $\mathbf{w}_k$ be the $k$-th column of $\mathbf{W}$, $u_k=y_k^2$ and $\boldsymbol{\xi}_k=y_k \mathbf{w}_k $. To prove the lemma, we demonstrate that $\boldsymbol{\xi}_1,...,\boldsymbol{\xi}_d$ follow a $ \textup{GAL}$ distribution and use the decomposition $$\mathbf{Wy} =\sum_{k=1}^d \boldsymbol{\xi}_k.$$

Let $k \in \{1,...,d\}$. Since $\mathbf{y}$ is standard Gaussian, $u_k=y_k^2$ follows a $\chi^2(1)$ distribution, or equivalently a $\textup{Gamma}(1/2,1/2)$ distribution. Therefore, $u_k/2 \sim \textup{Gamma}(1/2,1)$. Moreover, note that $ \sqrt{u_k} \mathbf{w}_k=|y_k|\mathbf{w}_k=y_k \textup{sign}(y_k)\mathbf{w}_k\buildrel d \over =y_k\mathbf{w}_k$ since $|y_k|$ and $\textup{sign}(y_k)$ are independent and $\textup{sign}(y_k)\mathbf{w}_k\buildrel d\over =\mathbf{w}_k$. Therefore, according to Proposition \ref{prop:transmut}, we have
$$ \boldsymbol{\xi}_k=\sqrt{\frac{u_k}{2}} \sqrt{2}\mathbf{w}_k \sim  \textup{GAL}_p(2\phi^{-1}\mathbf{I}_p,0,1/2).$$

Since $\boldsymbol{\xi}_1,...,\boldsymbol{\xi}_d$ are i.i.d. and following a $\textup{GAL}_p(2\phi^{-1}\mathbf{I}_p,0,1/2)$ distribution, we can use Proposition \ref{prop:summation} to conclude that $$\mathbf{Wy} =\sum_{k=1}^d \boldsymbol{\xi}_k\sim  \textup{GAL}_p(2 \phi^{-1}\mathbf{I}_p,0,d/2).$$ \end{proof}

We now focus on the second term of \eqref{modelePPCA} involving the noise vector.

\begin{lemma}
Let $\boldsymbol{\varepsilon}_i|\sigma^{2} \sim \mathcal{N}(0, \sigma^{2}\mathbf{I}_p)$ and $\sigma^{2} \sim \mathrm{Gamma}(a, b)$ then
$$\boldsymbol{\varepsilon}_i \sim \textup{GAL}_p\left( b^{-1}\textbf{I}_p,0,a \right).$$
\end{lemma}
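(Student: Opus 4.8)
The plan is to recognize the hierarchical structure as a Gaussian scale mixture and reduce it directly to Proposition \ref{prop:transmut} after a single change of scale. First I would rewrite $\boldsymbol{\varepsilon}_i$ in explicit scale-mixture form: conditionally on $\sigma^2$ we have $\boldsymbol{\varepsilon}_i \buildrel d \over = \sqrt{\sigma^2}\,\mathbf{z}$, where $\mathbf{z} \sim \mathcal{N}(0,\mathbf{I}_p)$ is independent of $\sigma^2$. This is exactly the representation to which the generalized Gauss-Laplace transmutation applies, the only discrepancy being that the mixing variable in Proposition \ref{prop:transmut} must be a \emph{unit-scale} gamma, whereas here it is $\sigma^2 \sim \textup{Gamma}(a,b)$.

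The second step is to normalize the gamma mixing variable. Using the shape-rate parameterization (density proportional to $x^{a-1}e^{-bx}$), the rescaled variable $u := b\,\sigma^2$ follows $\textup{Gamma}(a,1)$. Substituting $\sigma^2 = u/b$ gives $\boldsymbol{\varepsilon}_i \buildrel d \over = \sqrt{u}\,(\mathbf{z}/\sqrt{b})$, and I would absorb the deterministic factor $1/\sqrt{b}$ into the Gaussian term, setting $\mathbf{x} := \mathbf{z}/\sqrt{b} \sim \mathcal{N}(0, b^{-1}\mathbf{I}_p)$. Since $\mathbf{z}$ is independent of $\sigma^2$ and $u$ is a deterministic function of $\sigma^2$, the pair $(u,\mathbf{x})$ retains the independence required by the hypotheses of Proposition \ref{prop:transmut}.

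Finally I would invoke Proposition \ref{prop:transmut} with $s = a$ and $\mathbf{\Sigma} = b^{-1}\mathbf{I}_p$, which yields $\boldsymbol{\varepsilon}_i = \sqrt{u}\,\mathbf{x} \sim \textup{GAL}_p(b^{-1}\mathbf{I}_p, 0, a)$, the claimed result. There is no genuine obstacle here, since the argument is essentially bookkeeping on top of the transmutation property. The only point demanding care is the gamma parameterization: one must verify that $b$ plays the role of a rate rather than a scale, as this is precisely what produces the covariance $b^{-1}\mathbf{I}_p$ in the conclusion. Under the opposite (shape-scale) convention the rescaling would read $u = \sigma^2/b$ and the resulting distribution would be $\textup{GAL}_p(b\,\mathbf{I}_p, 0, a)$, inverting the dependence on $b$; matching the stated lemma therefore pins down the rate convention.
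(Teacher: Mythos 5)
Your proposal is correct and follows essentially the same route as the paper: the paper's proof also rewrites the noise as $\boldsymbol{\varepsilon}_i = \sqrt{b\sigma^2}\,\mathbf{e}_i$ with $\mathbf{e}_i \sim \mathcal{N}(0,b^{-1}\mathbf{I}_p)$, so that $b\sigma^2 \sim \mathrm{Gamma}(a,1)$ and the Gauss--Laplace transmutation (Proposition \ref{prop:transmut}) applies directly. Your version merely makes explicit the bookkeeping the paper leaves implicit, including the useful observation that the rate (rather than scale) convention for the gamma prior is what yields the covariance $b^{-1}\mathbf{I}_p$.
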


\begin{proof}
  Again, the Gauss-Laplace representation is leveraged. Indeed, the noise can be written as $$\boldsymbol{\varepsilon}_i = \sqrt{b\sigma^2} \mathbf{e}_i,$$
where $\mathbf{e}_i \sim \mathcal{N}(0,b^{-1}\mathbf{I}_p)$. Therefore, the Gauss-Laplace representation allows to conclude.
\end{proof}

Now that we have proved that both the signal and the noise term follow marginally a generalized Laplace distribution, we use Proposition \ref{prop:summation} which ensures that, assuming $b=\phi/2$, the sum of the two generalized Laplace random vectors is a generalized Laplace random vector:
\begin{equation}
\mathbf{x}_i\sim \textup{GAL}_p(2\phi^{-1}\mathbf{I}_p, 0, a+ d/2).
\end{equation}

Using the expression of the density of the generalized Laplace distribution, we eventually end up with the closed-form expression of the marginal likelihood of Theorem 1.

\subsection{Choosing hyperparameters}
\label{ss:hyp}
To obtain a closed-form expression of the marginal likelihood, we have shown that it is sufficient to assume that $b=\phi/2$. \modif{This constraint, which appeared quite arbitrarily for the sake of mathematical convenience, has the benefit of not being too limiting. Indeed, since the other parameter of the gamma prior is unconstrained, the prior variances of the loading matrix and the noise variance remain untied by the constraint, and may be chosen independently.}
Two hyperparameters remain henceforth to be tuned: the shape parameter of the gamma prior $a$ and the precision hyperparameter $\phi$. We developed data-driven heuristics for this purpose.

A first observation is that, when $d$ grows, $\sigma$ is expected to decay because the signal part of the model can be more expressive. This prior information can be distilled into the model by roughly centering the gamma priors on estimates of $\hat{\sigma}$ {(note that this rationale is close to the one of \citealp{hoff2007model}).} More precisely, our heuristic is to choose $a$ such that $\mathbb{E}(\sigma) \propto \hat{\sigma}$ for each $d$. In order for $\phi$ to control the diffusiveness of both the loading matrix and the variance, we specifically made the choice $a=\hat{\sigma}^2/\phi$. In our experiments, we chose the ML estimator $\hat{\sigma}=\sigma_\textup{ML}$ (which is the mean of the $p-d$ smallest eigenvalues of the covariance matrix, see \citealp{tipping1999}) but more complex estimates may be considered \citep{passemier}.

Regarding the remaining parameter $\phi$, we propose a heuristic based on the following statements which can be made regarding the problem of  dimension selection:
\begin{itemize}
\item overestimation of $d$ should be preferred to underestimation since loosing some information is much more damageable than having a representation not parsimonious enough,
\item consequently, the marginal likelihood curve as a function of the dimension should have two distinct phases: a first one when "signal dimensions" are added (before the true value of $d$), and a second one, when "noise dimensions" are added.
\end{itemize}
Thus, we built a simple heuristic criterion to judge the relevance of a choice of $\phi$ by the shape of the marginal likelihood curve. First, if the slope of the first part of the curve (before the maximum) is lower than the slope of the second part, this means that this choice leads to underestimation and is therefore discarded. Second, the criterion is equal to the discrete second derivative of the marginal likelihood curve evaluated at the maximum, in order to select a hyperparameter leading to a strong distinction between the two phases. This criterion is eventually maximized over a grid of values of $\phi$. \modif{When there is no maximum, we set the heuristic criterion to $-\infty$: this is equivalent to putting zero prior mass on the two extreme models of the curve, which is consistent with the idea of having two distinct phases in the marginal likelihood curve.} This scheme for hyperparameter choice is illustrated in Fig. \ref{heuristic} using the simpler simulation scheme described in Subsection \ref{ss:illus}.

\section{Numerical experiments}

In this section, we perform some numerical experiments in order to highlight the main features of the proposed approach and to compare it with state-of-the-art methods.

\subsection{Simulation scheme}

\begin{figure}
	\centering
	\includegraphics[width=\columnwidth]{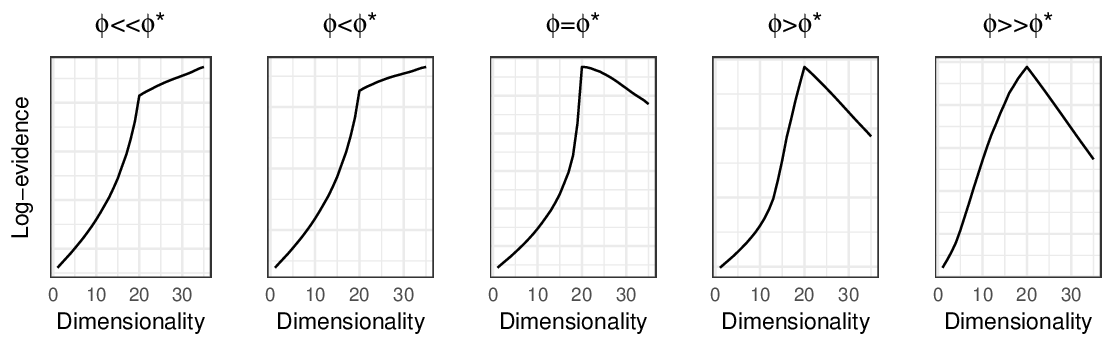}
	\caption[cap]{Different shapes of the marginal likelihood curve for growing values of $\phi$. $\phi^*$ corresponds to the maximum of the heuristic criterion that we describe is Subsection \ref{ss:hyp}. The true dimensionality is 20. More detailed results are available as an online GIF animation (\url{http://pamattei.github.io/animationeasy.gif}).}
	\label{heuristic}
\end{figure}

To assess the performance of our algorithm (referred hereafter as ngPPCA or NG, for short), we consider the following simulation scheme in the following experiments. We follow the simulation setup proposed in \cite{bouveyron2011intrinsic} based on their isotropic PPCA model. We therefore simulate data sets following the isotropic PPCA model which assumes that the covariance matrix of $X$ has only two different eigenvalues $\alpha$ and $\beta$ (instead of $d+1$ in the PPCA model). In this case, the signal-to-noise ratio (SNR hereafter) is simply defined by $$\textup{SNR}=\frac{\alpha d}{\beta (p-d)}.$$ In our simulation, $\beta$ is set up to 1 and $\alpha>1$, which will control the strength of the signal, varies to explore different signal-to-noise ratios. Then, an orthonormal $p\times p$ matrix $\mathbf{Q}$ is drawn uniformly at random. The data is eventually generated according to a centered Gaussian distribution with covariance matrix $$\mathbf{Q}^T\textup{diag}(\overbrace{\alpha,...,\alpha}^{d\textup{ times}},\overbrace{1,...,1}^{p-d\textup{ times}})\mathbf{Q}.$$ Finally, the number $p$ of variables is fixed to $50$ in all experiments and the number $n$ of observations varies in the range $\{40,50,70,100\}$.

\subsection{Introductory examples}
\label{ss:illus}

We first conduct two small simulations to illustrate the behaviour of our algorithm and its difference {with the Laplace approximations of \cite{minka2000automatic} and \citet{sobczyk2018}}. We consider two scenarios: a simple case and a harder and more realistic one.

\begin{figure}[p]
\centering
\includegraphics[width=0.9\columnwidth]{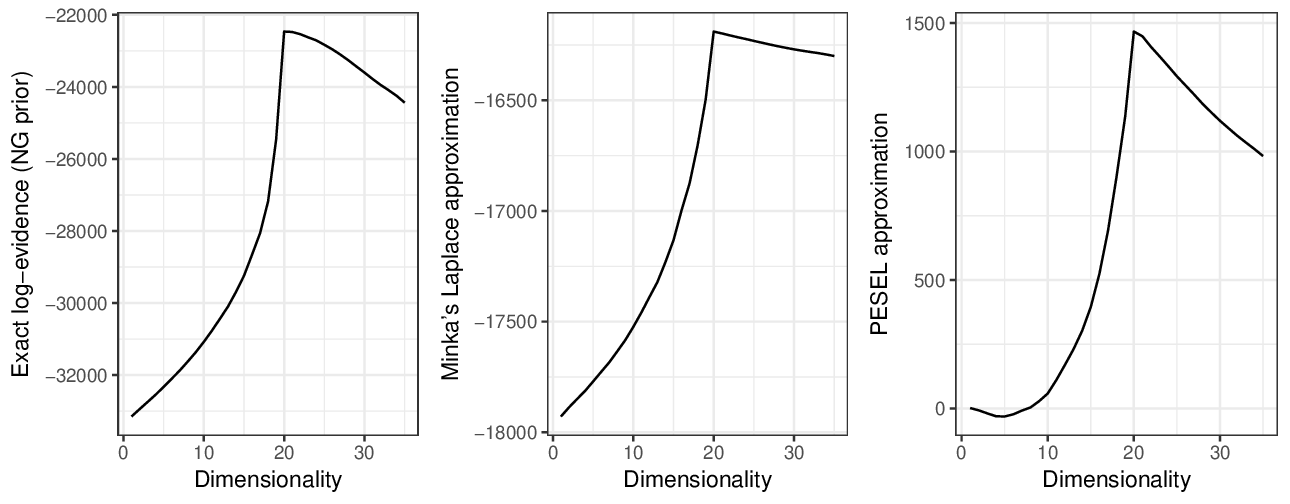}\\
\hspace{4ex} \small NG (our methodology) \hspace{8ex} Laplace \citep{minka2000automatic} \hspace{6ex} PESEL  \citep{sobczyk2018}
\caption{{Exact log-evidence for ngPPCA (left) and the Laplace approximations of \cite{minka2000automatic} (middle) and \citet{sobczyk2018} (right) for the simpler simulation scenario ($n=100$). The true dimensionality is $d=20$. All three curves have the desirable properties detailed is Subsection \ref{ss:hyp} and find the correct dimensionality $d=20$.}}
\label{easy}
\end{figure}

\begin{figure}[p]
\centering
\includegraphics[width=0.9\columnwidth]{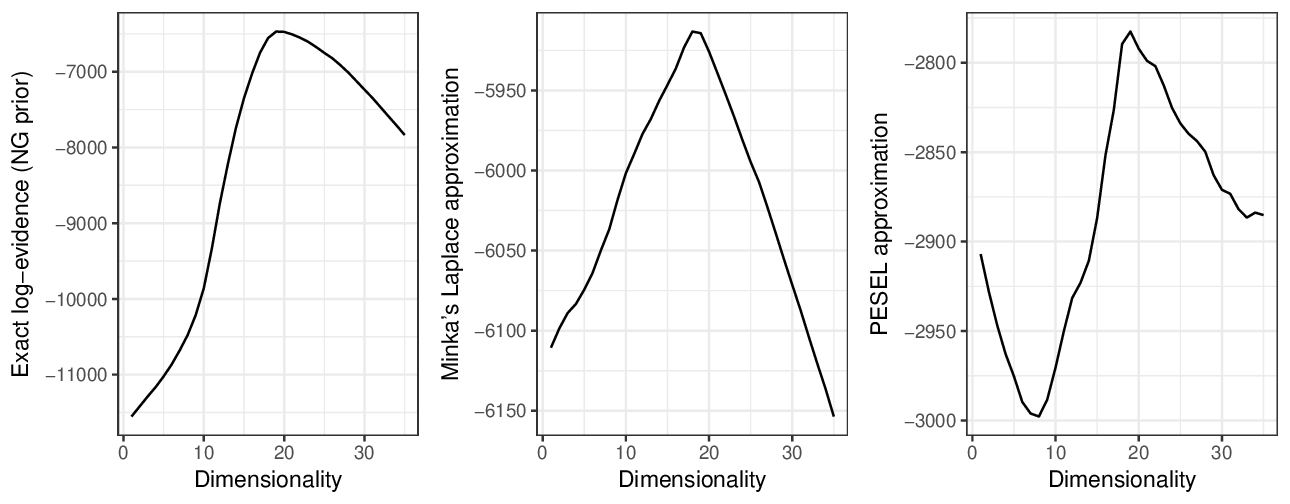}\\
\hspace{4ex} \small NG (our methodology) \hspace{8ex} Laplace \citep{minka2000automatic} \hspace{6ex} PESEL  \citep{sobczyk2018}
\caption{{Exact log-evidence for ngPPCA (left) and the Laplace approximations of \cite{minka2000automatic} (middle) and \citet{sobczyk2018} (right) for the more challenging simulation scenario ($n=40$). The true dimensionality is $d=20$. Both Laplace approximations have their way of preferring overly simple models, while the shape provided by ngPPCA is consistent with the ones obtained with $n=100$.}}
\label{hard}
\end{figure}

\paragraph{Simple scenario} We consider a setup with $n=100$ and $\textup{SNR}=20$. In this simple scenario, we first illustrate our heuristic for hyperparameter tuning by displaying marginal likelihood curves for different values of $\phi$ (Fig. \ref{heuristic}). The heuristic criterion allows to find the desired shape, leading to a correct dimensionality estimation. A GIF animation displaying all values of the criterion for a large grid of 200 values of $\phi$ is provided as a online material\footnote{\url{http://pamattei.github.io/animationeasy.gif}}. {This animation illustrates on this simple data set, a wide range of values of $\phi$ lead to dimensionality recovery. Our heuristic proposal $\phi^*$ is roughly located in the center of this range. On Fig. \ref{easy}, we compare the results of our algorithm with the Laplace approximations of the marginal likelihood of \cite{minka2000automatic} and \citet{sobczyk2018}. In this case, both methods recover the true dimensionality of the data and are very confident with their choice (the posterior probability of the true model is higher than $99\%$ for all approaches). The three curves have a similar shape, in compliance with the expected shape, as detailed in Subsection \ref{ss:hyp}.}

 \paragraph{Challenging scenario} We now consider a setup with $n=40$ and $\textup{SNR}=20$. A GIF animation illustrating hyperparameter tuning is provided online\footnote{\url{http://pamattei.github.io/animationhard.gif}}. Again, our results are compared with Laplace approximations (Fig. \ref{hard}).
{Regarding our exact approach (left panel), the marginal likelihood curve has a similar shape to the one of the first simulation. This shape is satisfactory, even though the algorithm slightly underestimates the dimensionality by choosing the model $\mathcal{M}_{19}$ (with posterior probability $>99\%$). The true model $\mathcal{M}_{20}$ is the second best model according to the NG prior.

Minka's \citeyearpar{minka2000automatic} Laplace approximation prefers simpler models (with lower intrinsic dimensionality). Indeed, the top two models chosen by this Laplace approximation are $\mathcal{M}_{18}$ (with posterior probability $73.8\%$) and $\mathcal{M}_{19}$ (with posterior probability $26.2\%$).

Like our approach, PESEL \citep{sobczyk2018} chooses $\mathcal{M}_{19}$ with posterior probability $>99\%$. However, it also has a tendency to prefer overly simple models. Indeed, the second best model is $\mathcal{M}_{18}$, and PESEL gives a surprisingly high posterior probability to very simple models with less than 5 dimensions. Indeed, for example, PESEL prefers $\mathcal{M}_{1},\mathcal{M}_{2},\mathcal{M}_{3},$ or $\mathcal{M}_{4}$ over $\mathcal{M}_{10}$. 

By being more resistant to underestimation, the exact method appears less likely to destroy valuable information, which would be damaging in a dimensionality selection context.

As a summary, those experiments confirm the expected behaviors of NG \textit{vs.} Laplace approximations: in the first scenario ($n=100, p=50$), the asymptotic assumption of the Laplace approximations are much more relevant than in the second setup ($n=40, p=50$). Our method, which does not rely on such an assumption, is less impacted by the reduction of the sample~size. Moreover, our heuristic for hyperparameter selection prevents against damaging underestimation.}

\subsection{Benchmark comparison with other dimension selection methods}

\begin{figure}
\centering
\includegraphics[width=0.72\textwidth]{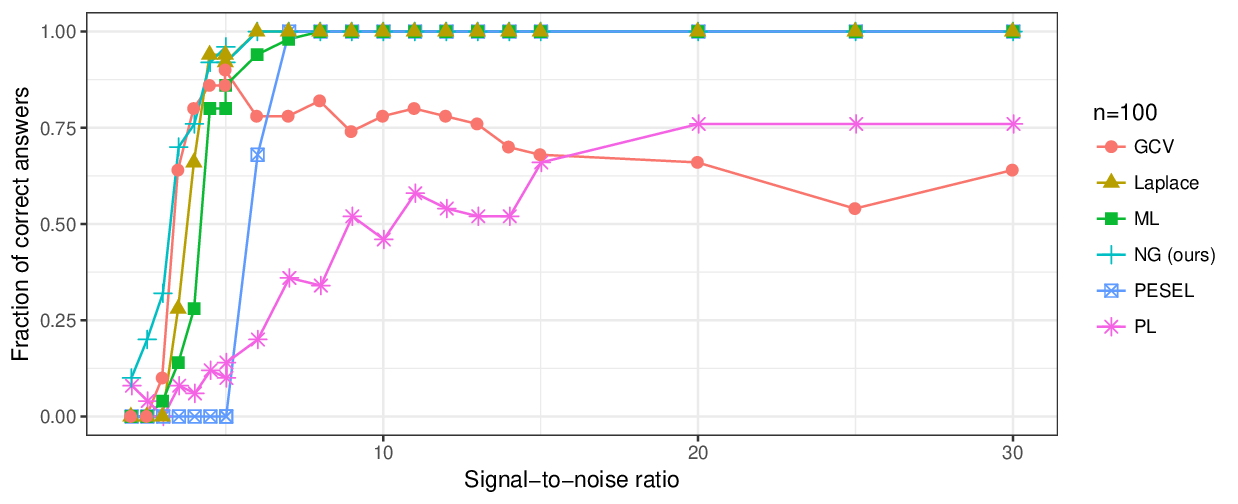}
\includegraphics[width=0.72\textwidth]{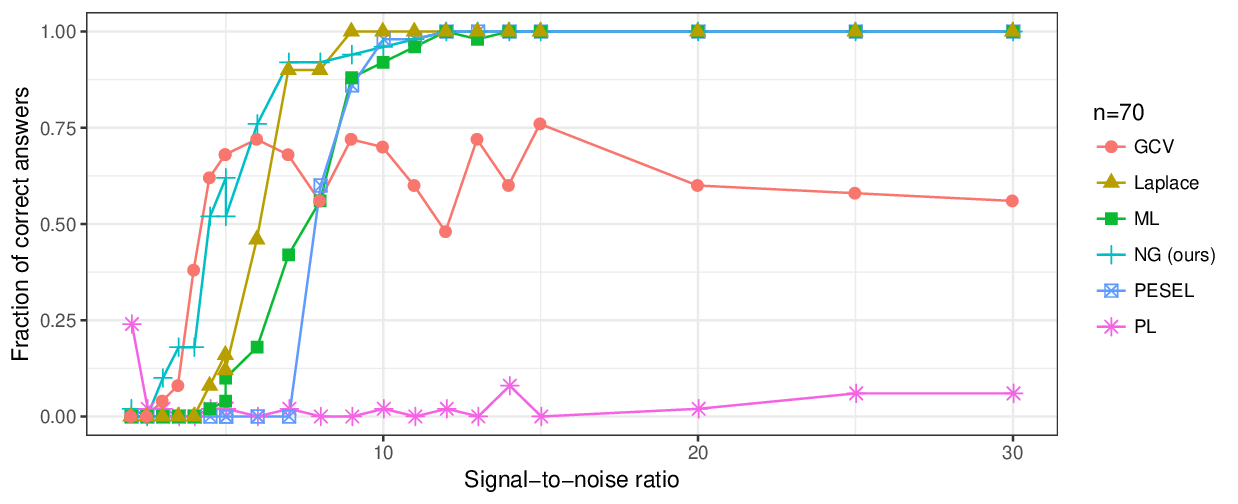}
\includegraphics[width=0.72\textwidth]{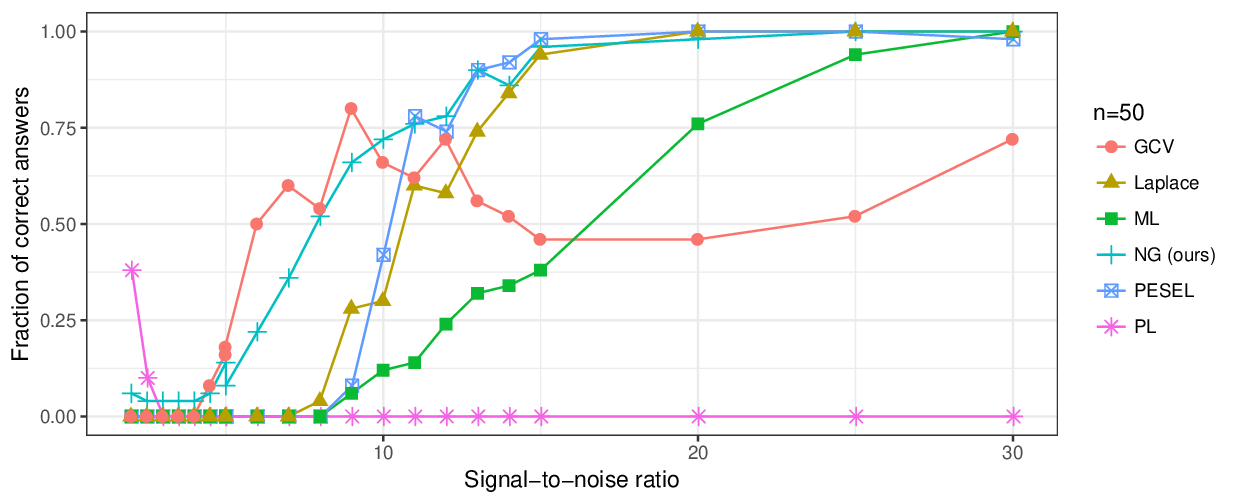}
\includegraphics[width=0.72\textwidth]{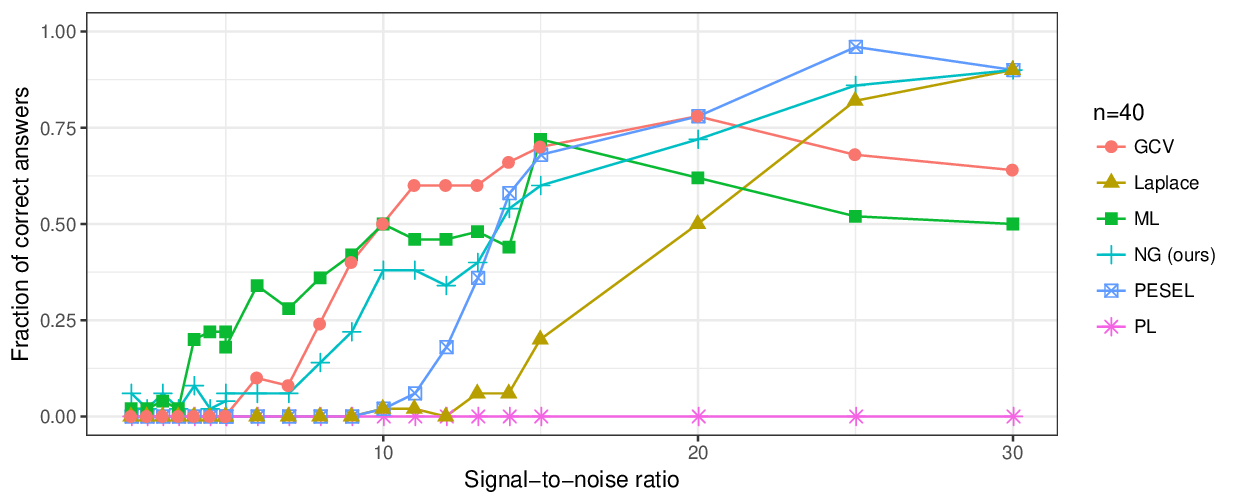}
\caption{Percentage of correctly estimated dimensions for different sample sizes (50 replications) of NGPPCA (NG) and its competitors for different signal-to-noise ratios. The true model corresponds to $d=20$. From top to bottom, the data sample sizes are respectively $100,~70,~50$ and $40$.}
\label{bench}
\end{figure}

\begin{figure}
	\centering
	\includegraphics[width=0.62\textwidth]{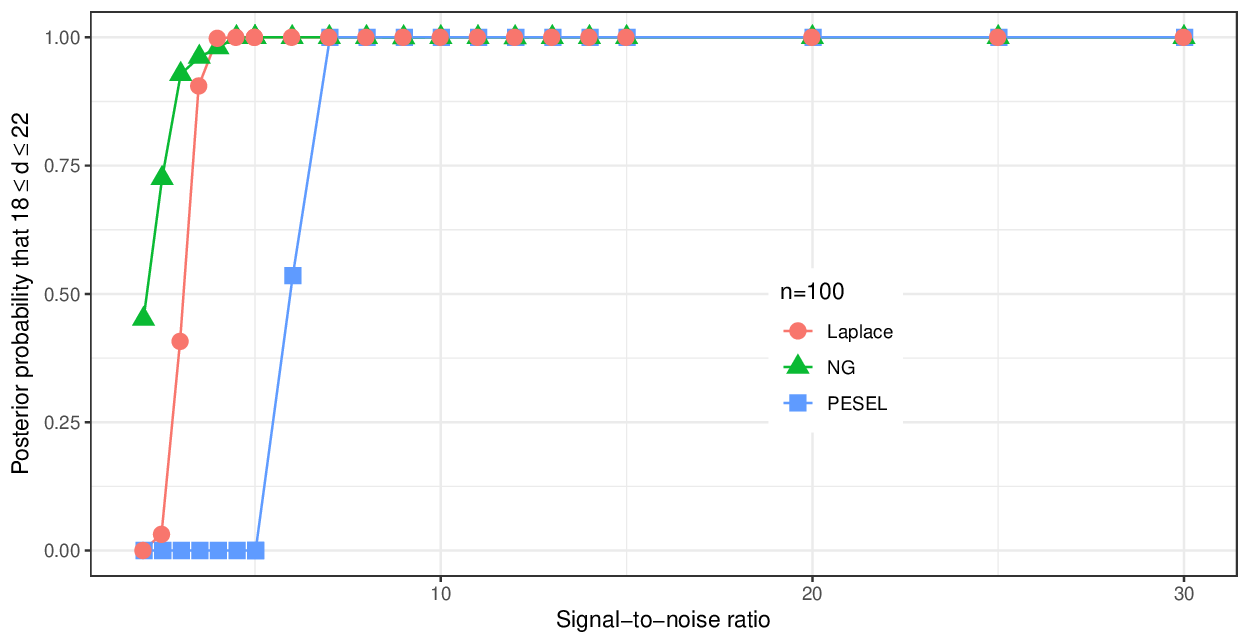}
	\includegraphics[width=0.62\textwidth]{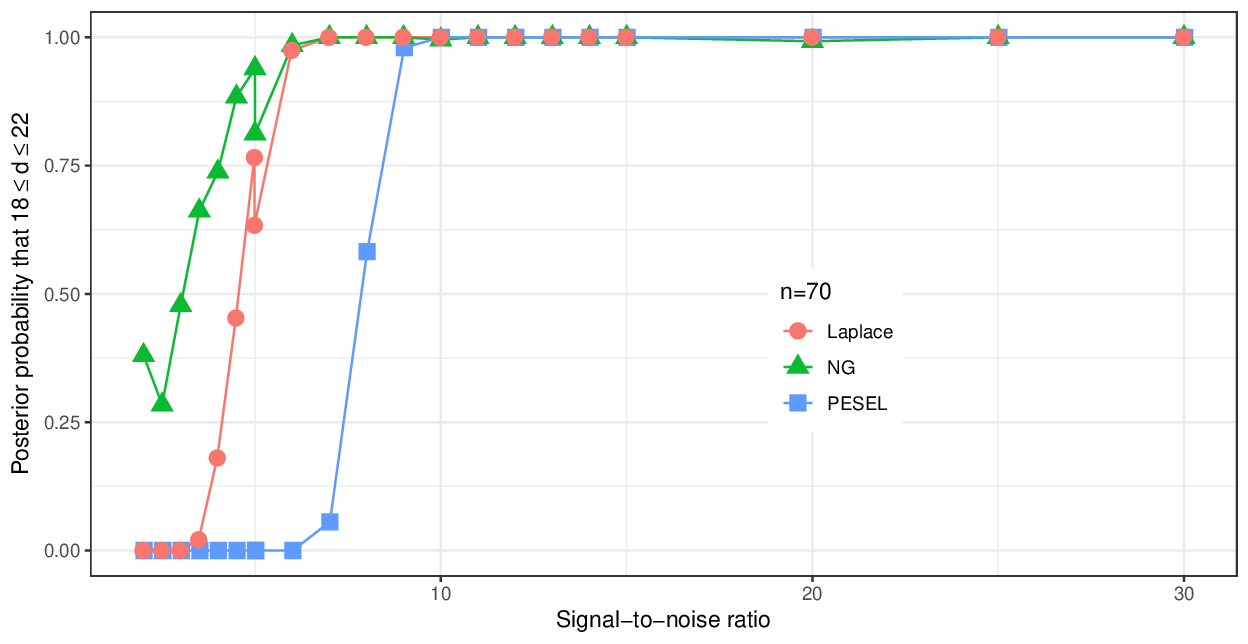}
	\includegraphics[width=0.62\textwidth]{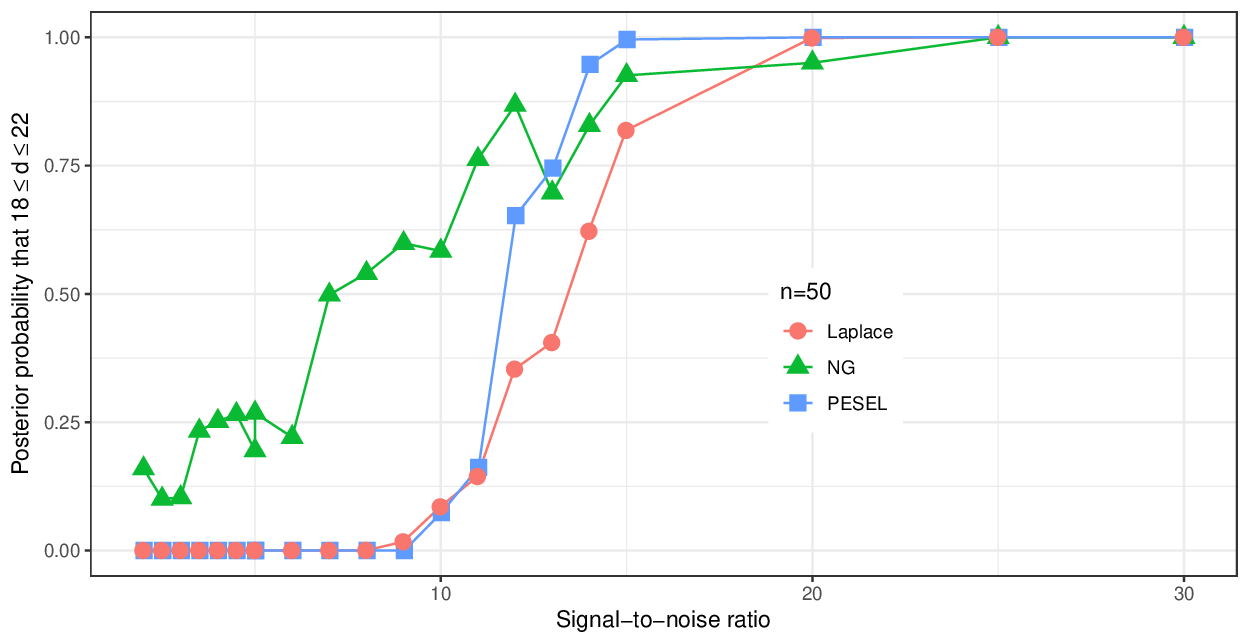}
	\includegraphics[width=0.62\textwidth]{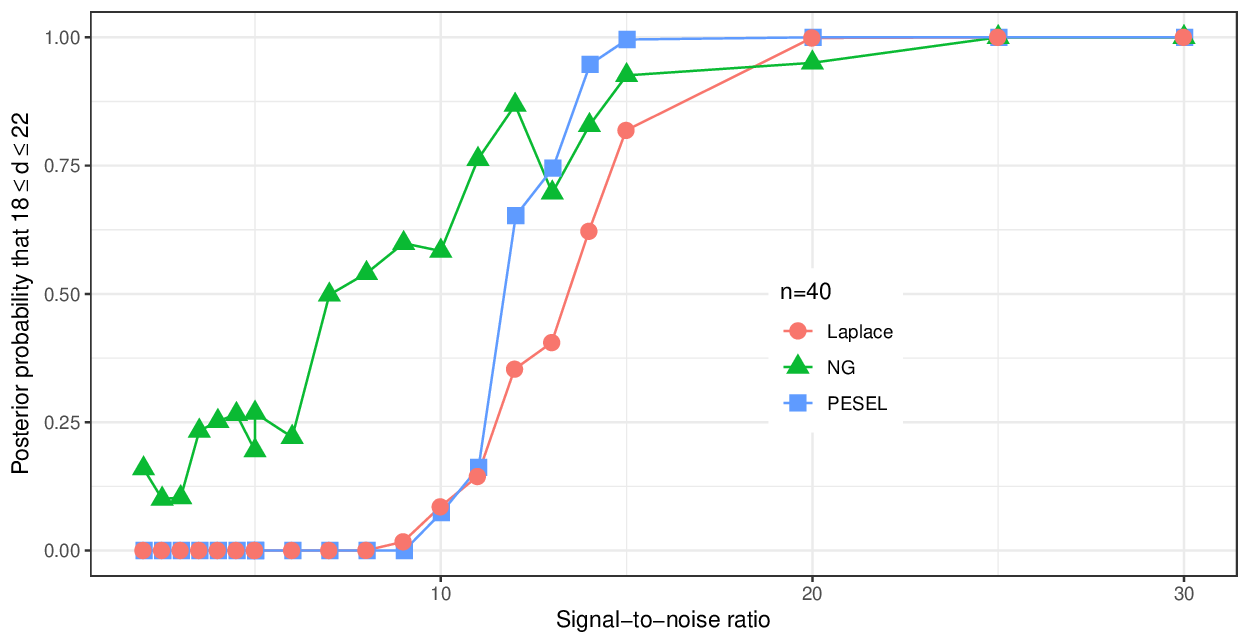}
	\caption{\modif{Posterior probability that $d \in \{18,...,22\}$ for different sample sizes (50 replications) of NGPPCA (NG) and its two Bayesian competitors for different signal-to-noise ratios. The true model corresponds to $d=20$. From top to bottom, the data sample sizes are respectively $100,~70,~50$ and $40$.}}
	\label{bench2}
\end{figure}

This section now focuses on the comparison of our methodology with other dimension selection methods. We here consider all possible scenarios with $n \in \{40,50,70,100\}$ and a SNR grid going from 1.5 to 30 (50 repetitions are made for each case). We compare the performance of our technique based on the normal-gamma prior (NG) with the following four competitors:
\begin{itemize}
\item the Laplace approximation of \cite{minka2000automatic} which is a benchmark Bayesian method for dimension selection,
\item the generalized cross-validation approximation (GCV) of \cite{josse2012selecting} which is known to give state of the art results in many scenarios (see the vast simulation study of \citealp{josse2012selecting}),
\item {the high-dimensional Laplace approximation of \citet{sobczyk2018} called PESEL, which performs well even in scenarios that imply a large number of variables,}
\item the profile likelihood approach (PL) of \cite{zhu2006automatic} which represents scree-based techniques and has been very popular in several different contexts \citep{fogel2007,evangelopoulos2012latent},
\item the ML approach of \cite{bouveyron2011intrinsic}, which maximizes a non-asymptotic criterion (the likelihood). Notice that this approach is specifically adapted to our simulation scheme and this advantage allows us to consider this technique as a gold-standard for this simulated data.
\end{itemize}
\modif{We use two metrics to evaluate the results, one based on point estimates of the dimensionality, and one based on the posterior mass of a neighbourhood of the true dimensionality.

	First, we assess in Fig. \ref{bench} the percentage of correct answers given by each algorithm, which is a standard measure used in other simulations studies (see e.g. \citealp{minka2000automatic,hoyle2008automatic,ulfarsson2008dimension}).}
One can first notice that all methods vastly outperform the profile likelihood (PL) approach, which seems not well-suited for small sample sizes. {Second, generalized cross-validation gives often satisfactory results, but fails to be competitive with model-based methods (Laplace, ML, NG and PESEL) when the SNR is high. The ML approach has a good behavior, especially when $n$ is very small, this is partly explained by the fact that it is designed for this very simulation setup.
Regarding the three Bayesian methods, the Laplace approximation is often outperformed by PESEL, and consistently outperformed by our approach (NG), mainly because of the important violation of the $n \rightarrow +\infty$ assumption. PESEL gives very good results at high SNR, but is outperformed by our approach at low SNR, which eventually
is the only method that gives satisfactory results in all settings (high and low SNR, moderate and small $n$).}

\modif{Second, to compare the relevance of the various Bayesian posteriors, we evaluate the posterior probabilities that $d \in \{18,...,22\}$ for the two Laplace approximations, as well as for our approach (Fig. \ref{bench2}). By this standard, NG outperforms the two Laplace approximations in almost all scenarios. Perhaps more importantly, these results suggest that both Laplace approximations are generally overly confident, and underestimate model uncertainty. This is especially the case for PESEL, which goes very quickly from being very confident that $d$ is far away from 20, to being very confident that $d$ is exactly 20. On the other hand, our approach always gives a small posterior mass to the fact that $d$ is close to 20, and slowly grows more and more confident. These results illustrate that BIC-like approximations usually provide good point estimates, but poor posterior estimates (see e.g. \citealp[Section 5.2]{drton2017}, for another example of this quite general phenomenon).}

\section{Conclusion}

PCA is more of a descriptive and exploratory tool than a model. Therefore, no unique dimension selection method should be uniquely preferred -- sometimes, very relevant information may actually lie within the \emph{last} PCs \citep[section 3.4]{jolliffe2002principal}.

However, PCA's ubiquity in the statistical world makes necessary the search for guidance procedures to help the practitioner choose the number of PCs. This need is even more critical when the data are scarce or particularly expensive. Our work, by deviating from usually adopted asymptotic settings, is a step in that direction. Regarding future work, our exact computation of model posterior probabilities may be used to perform Bayesian model averaging \citep{hoeting1999bayesian} in predictive contexts. Potential applications could involve principal component regression \citep[Chapter 8]{jolliffe2002principal}, image denoising \citep{deledalle2011image}, or deep learning \citep{pcanet}. {In that context, potential drawbacks of approaches based on the marginal likelihood (like ours or the Laplace approximations) would be that they can suffer importantly from model misspecification, and that they might not be optimal for predictive purposes.}

As a concluding note, this work comes as an illustration that exactly computing the marginal likelihood is sometimes easier than expected. Although both recent asymptotic approximations \citep{drton2017} and the MCMC arsenal \citep{friel2012estimating} are well-equipped to deal with marginal likelihoods, we argue, like \cite{lin2009marginal}, that finding exact expressions is an important task that should not be deemed untractable too hastily.

\section*{Acknowledgement}

Part of this work was conducted while Pierre-Alexandre Mattei was visiting University College Dublin, funded by the Fondation Sciences Mathématiques de Paris (FSMP).

\bibliography{biblio}

\end{document}